\documentclass[12pt]{article}

\usepackage{fullpage,amsthm,amsmath,makeidx,citesort,graphicx}
\usepackage{amssymb,epsf,latexsym}
\usepackage{amstext,bm,algorithmic}

\setlength{\topmargin}{-0.6in}
\setlength{\oddsidemargin}{-0.05in}
\setlength{\textwidth}{6.6in}
\setlength{\textheight}{9.0625in}

\newtheorem{theorem}{Theorem}[section]

\newtheorem{prop}[theorem]{Proposition}
\newtheorem{lemma}[theorem]{Lemma}
\newtheorem{definition}[theorem]{Definition}

\newtheorem{algorithm}[theorem]{Algorithm}

\theoremstyle{definition}
\newenvironment{program}{\indent {\bf PROG} }


\newcommand{\mO}{|\mathcal{O}|}
\newcommand{\mA}{|\mathcal{A}|}
\newcommand{\mpA}{\mathbf{p}_{\mA}}
\newcommand{\mqA}{\mathbf{q}_{\mA}}

\newcommand{\post}[2]{
\centering \leavevmode
 \includegraphics[width=#2cm]{#1}
 }

\author{Ji Zhu$^\dagger$ and Mudhakar Srivatsa$^\ddagger$\\
Dept of Electrical and Computer Engg, University of Illinois at Urbana-Champaign$^\dagger$\\
IBM T.J. Watson Research Center$^\ddagger$\\
{\tt \small jizhu1@illinois.edu, msrivats@us.ibm.com}
}

\begin{document}

\title{Quantifying Information Leakage in Finite Order Deterministic Programs \footnote{A shorter version of this paper is submitted to ICC 2011.}}

\maketitle

\begin{abstract}
Information flow analysis is a powerful technique for reasoning about the sensitive information exposed by a program during its execution. While past work has proposed information theoretic metrics (e.g., Shannon entropy, min-entropy, guessing entropy, etc.) to quantify such information leakage, we argue that some of these measures not only result in counter-intuitive measures of leakage, but also are inherently prone to conflicts when comparing two programs $P_1$ and $P_2$ $-$ say Shannon entropy predicts higher leakage for program $P_1$, while guessing entropy predicts higher leakage for program $P_2$. This paper presents the first attempt towards addressing such conflicts and derives solutions for conflict-free comparison of finite order deterministic programs.
\end{abstract}	

\section{Introduction} \label{sec:introduction}
Protecting sensitive and confidential data is becoming more and more important in many fields of human activities, such as electronic commerce, auctions, payments and voting. Information flow analysis is a powerful technique for reasoning about the sensitive information exposed by a program during its execution \cite{QIF_Backes09,wittbold1990information,Clarkson1443196}. Existing approaches to information flow analysis can be broadly classified into two: qualitative and quantitative approach. Qualitative information flow analysis, such as taint tracking \cite{taint1,taint2}, are coarse-grained $-$ often only distinguishing between {\em possible} leakage and {\em no} leakage. 

Recently, quantitative information analysis \cite{QIF_Backes09,Hamadou09,smith2009foundations,pliam2000incomparability} techniques have been proposed to alleviate this problem by offering a more fine-grained quantitative assessment of information leakage. Such techniques adopt information theoretic metrics \cite{2009arXivTeixeira,cachin1997entropy} such as mutual information between the secret/sensitive input to a program and its public output to quantify information leakage, as shown in figure \ref{fig:ProgramLeakage}.
\begin{figure}
\post{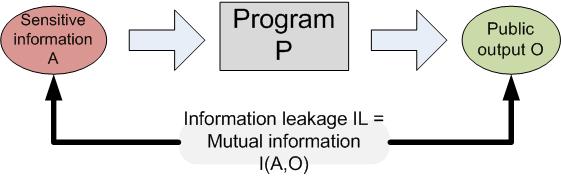}{6}
\label{fig:ProgramLeakage}
\caption{Quantification of Information Leakage in a Program}
\end{figure}
In doing so, several entropy measures have been used to assess mutual information, including, Shannon entropy, Renyi entropy, Guessing entropy (see \cite{Massey94guessingand,smith2009foundations,Hamadou09} for more details), and so on. However, in most past work, the choice of such entropy measure has been ad hoc (mostly driven by sample programs) $-$ sometimes leading to counter-intuitive results. Consider the following two programs (by Smith$^{\cite{smith2009foundations}}$), where the secret input $A$ is uniformly distributed $8k$-bit integer with $k\geq 2$, \& denotes bitwise {\em and} and $0^{7k-1}1^{k+1}$ denotes a binary constant.

\begin{program} {\bf P1}

\label{prog:OisA}

\begin{algorithmic}[H]
\IF{$A\equiv0\mod{8}$}
	\STATE $O=A$
\ELSE \STATE $O=1$
\ENDIF
\end{algorithmic}
\end{program}

\begin{program} {\bf P2} 
\label{prog:OpartofA}

\begin{algorithmic}[H]
\STATE $O = A \ \&\ 0^{7k-1}1^{k+1}$
\end{algorithmic}
\end{program}
Intuitively, one might argue that PROG P1 has much higher information leakage than PROG P2 when $k$ is large, because it reveals complete information about the secret input with probability $\frac{1}{8}$; on the other hand, when $k$ is large, PROG P2 reveals roughly $\frac{1}{8}$ of the number of bits in $A$. However, applying Shannon entropy measure and computing the mutual information $I_1$ between $A$ and $O$ yields a counter intuitive result:
\begin{eqnarray*}
P1: I_1(A,O) &=& -{7\over 8}\log{7\over8} - {1\over 8}\log{1\over 2^{8k}} = k+0.169, \\
P2: I_1(A,O) &=& -2^{k+1}\cdot{2^{7k-1}\over 2^{8k}}\log{2^{7k-1}\over 2^{8k}} = k+1,
\end{eqnarray*}
i.e., leakage by PROG P1 is smaller than leakage by PROG P2, which violates popular consensus in information leakage literature \cite{Hamadou09,smith2009foundations}. Indeed, from a security standpoint, PROG P1 leaves $A$ highly vulnerable to being guessed (e.g., when it is a multiple of 8), while PROG P2 does not (at least for large $k$).

In this paper we argue that past work has failed to address which entropy measure(s) is best suited for quantifying information leakage. Further, this paper shows that some of these entropy based measures (proposed by past work) may be conflicting when they are applied to two programs $P_1$ and $P_2$, i.e., entropy measure $H$ predicts higher leakage for program $P_1$, while entropy measure $H'$ predicts higher leakage for program $P_2$. This paper (to the best of our knowledge) presents the first attempt to analyze different information leakage metrics, show the existence of conflicts in measures proposed by past work and propose a new method for comparing information leakage in finite order deterministic programs. 

{\bf Outline.} The paper is structured as follows. In Section \ref{sec:modelframe}, we present a program model for finite order deterministic programs. Section \ref{sec:conflict} shows the existence of conflicts between leakage measures proposed by past work, followed by our conflict-free leakage metric in Section \ref{sec:solution}. We analyze a few sample programs using our leakage measure in Section \ref{sec:experiment} and conclude in Section \ref{sec:conclusion}.


\section{Model Framework}\label{sec:modelframe}


In this section, we present a formal model for a single-input single-output (SISO) deterministic program and Renyi-entropy based definition of information leakage. A SISO deterministic program is modeled as a group of onto mappings: $O= F_{\mA}(A),\forall \mA\in N^+$, where $A$ is the high (secret/sensitive) input and $O$ is the program output, where $\mA$ denotes the size of the high input set. In other words, for every $\mA\in N^+$, $F_{\mA}$ is an onto mapping from $A\in \mathcal{A}=\{0,1,...\mA-1\}$ to $O\in\mathcal{O}$. We note that $\mA$ acts a tune able security parameter for the program; assuming $\mO$ is fixed, one may be able to increase $\mA$ with the goal of improving the security level of the program. More formally, a SISO deterministic program is defined as follows:

\begin{definition}
A {\em SISO Deterministic Program} is denoted as a 4-tuple $(\mqA, \mA, F_{\mA},\mpA)$, where $\forall \mA\in N^+$, $A$ is a random variable in $\mathcal{A}=\{0,1...\mA-1\}$  with distribution vector $\mqA$, $O=F_{\mA}(A)$ is an onto mapping from $\mathcal{A}$ to $\mathcal{O}$, and $\mpA$ denotes the distribution vector of output $O$ under mapping $F_{\mA}(\cdot)$.

A SISO deterministic program $(\mqA,\mA, F_{\mA},\mpA)$ is said to be a {\em Finite Order SISO Deterministic Program} (FOP) if and only if
\begin{eqnarray*}
\sup_{\mA\in N^+} ||\mathbf{p}_{\mA}||_0 <\infty
\end{eqnarray*}
It is called an {\em Infinite Order SISO Deterministic Program} (IOP) if and only if
\begin{eqnarray*}
\sup_{\mA\in N^+} ||\mathbf{p}_{\mA}||_0 =\infty
\end{eqnarray*}
where $||\mathbf{p}_{\mA}||_0$ is the zero norm of $\mathbf{p}_{\mA}$.
\end{definition}

Unless explicitly specified, in the following portions of this paper, we assume that the secret input $A$ has an uniform prior distribution in $\mathcal{A}$ for any $\mA$.


A key difference between FOPs and IOPs is that the entropy of output $O$ is bounded for FOPs, and so is information leakage. Assuming that $\mO$ is fixed (independent of $\mA$), intuitively the security level of a real FOP will be non-decreasing in $\mA$. In the following portions of this paper we focus on information leakage metrics for FOPs.

Having formalized the program model, we define leakage using Renyi entropy \cite{rrnyi-measures}, which covers most of the entropy metrics adopted by past work on information flow analysis \cite{Massey94guessingand,smith2009foundations,Hamadou09,cover2006elements}, such as Shannon entropy, min-entropy, vulnerability one-guess entropy (proposed by Hamadou et. al, \cite{Hamadou09}), etc. Renyi entropy is defined as follows: 
For a random variable $X$ with distribution $\mathbf{p}=(p_0,p_1,...,p_n)$, its Renyi entropy is defined as:
\begin{eqnarray*}
H_\alpha(X) = \frac{1}{1-\alpha}\log \sum_{i=0}^{n} p_i^\alpha
\end{eqnarray*}
where $\alpha$ is a parameter. In this paper we also apply $H_\alpha(\mathbf{p})$ to denote $H_\alpha(X)$. When $\alpha = 1$, Renyi entropy becomes Shannon entropy; when $\alpha \rightarrow \infty$, $H_\infty(X) = -\log \sup_{i}p_i$ is the min-entropy; when $\alpha$ = 0, $H_0(X)$ denotes the vulnerability one-guess entropy. 

According to general consensus in information flow analysis literate, information leakage ($IL$) of a program $C=(\mqA,\mA,F_{\mA},\mpA)$ (at a given $\mA$) under $\alpha$-Renyi entropy metric is defined as the mutual information $I_\alpha$ between $O$ and $A$:
\begin{eqnarray*}
IL_\alpha(C,\mA) = I_\alpha(O,A) = H_\alpha(O) - H_\alpha(O| A) = H_\alpha(O)
\end{eqnarray*}
where $IL_\alpha(C,\mA)$ denotes a class of information leakage metrics (for different values of $\alpha$) of program $C$. Note that since the program is deterministic $H_\alpha(O | A)=0, \forall\alpha$.



It is worth noting that the mutual information $I_\alpha(O,A)$ may also be defined as $I_\alpha(O,A)=H_\alpha(A)-H_\alpha(A|O)$, which differs from $H_\alpha(O)-H_\alpha(O|A)$ when $\alpha\neq 1$. This alternative definition is not considered here because when $A$ is uniformly distributed, $I_\alpha(O,A)=H_\alpha(A)-H_\alpha(A|O)$ reduces to be Shannon mutual information for all $\alpha$, as shown below:
\begin{eqnarray*}
H_\alpha(A) - H_\alpha(A|O) &=& -\log\mA - \sum_{o\in \mathcal{O}} P(O=o)H_\alpha(A|O=o) \\
&=& -\log\mA + \sum_{o\in \mathcal{O}} P(O=o) \log|\{a:F_{\mA}(a) = o\}| \\
&=& -\log\mA + \sum_{o\in \mathcal{O}} P(O=o) \log\left(\mA P(O=o)\right) = H_1(O)
\end{eqnarray*}

In the next section, we show that this definition of information leakage results in conflicts when comparing two programs. In the subsequent sections we develop solutions for conflict-free comparison of two programs. 

\section{Conflicts in Information Leakage metrics} \label{sec:conflict}

In this section we show several examples of conflicts while comparing two program's information leakage. Recall PROG P1 and PROG P2 from Section \ref{sec:introduction}. Consider the Renyi mutual information of these two PROGs when $\alpha=0, 1, \infty$.
\begin{eqnarray*}
\begin{cases}
IL_0(P1,2^{8k}) = 8k-3, ~IL_0(P2,2^{8k}) = k+1 \\
IL_1(P1,2^{8k}) = k+0.169, ~IL_1(P2,2^{8k}) = k+1 \\
IL_\infty(P1,2^{8k}) = 0.134, ~IL_\infty(P2,2^{8k})=k+1
\end{cases}
\end{eqnarray*}
Note that only the comparing $IL_0(P1,2^{8k})$ and $IL_0(P2,2^{8k})$ agrees with our intuition that P1 leaks much more information than P2; however, comparing $IL_1(P1,2^{8k})$ and $IL_1(P2,2^{8k})$ shows that P1 leaks about the same amount of information as P2; comparing $IL_\infty(P1,2^{8k})$ and $IL_\infty(P2,2^{8k})$ shows that P2 leaks much more information than P1. We see that the leakage measures for different $\alpha$ values conflict with each other, and some of them are even counter-intuitive.

Smith \cite{smith2009foundations} and Hamadou et. al. \cite{Hamadou09} argue that $IL_0$ is more important than $IL_1$ in information flow analysis, because in the above example, $IL_0$ coincides with the intuition but $IL_1$ does not. However, it is not difficult to come up with other examples where $IL_1$ coincides with the intuition but $IL_0$ does not. Consider the following two programs, where the high input $A$ is an uniformly distributed $k$-bit integer with $k\geq 2$ and $L$ is a parameter in $\mathcal{A}$.

\begin{program} {\bf P3} Password Checker

\begin{algorithmic}[H]
\IF{$A = L$}
\STATE $O=1$
\ELSE \STATE $O=0$
\ENDIF
\end{algorithmic}
\end{program}

\begin{program} {\bf P4} Binary Search

\begin{algorithmic}[H]
\IF{$A\geq L$}
\STATE $O=1$
\ELSE \STATE $O=0$
\ENDIF
\end{algorithmic}
\end{program}

Consider $L = \mA/2$. The intuition is that PROG P4 leaks much more information than PROG P3, because when $k$ is large, the probability of $A=L$ becomes so low that PROG P3 leaks almost no information. But PROG P4 always leaks $1$ bit of information, irrespective of $\mA$. Now, consider the Renyi mutual information when $\alpha=0, 1, \infty$:
\begin{eqnarray*}
\begin{cases}
IL_0(P3,2^{k}) = 1,~ IL_0(P4,2^{k}) = 1 \\
IL_1(P3,2^{k}) = H_1(\frac{\mA-1}{\mA},\frac{1}{\mA}),~ IL_1(P4,2^{k}) = 1 \\
IL_\infty(P3,2^{k}) = -\log(1-\frac{1}{\mA}), ~IL_\infty(P4,2^{k})=1
\end{cases}
\end{eqnarray*}
We see that the comparing result when $\alpha=0$ fails to coincide with the intuition, while the comparing results when $\alpha=1$ or $\infty$ match the intuition. The conflict between information leakage metrics for different values of $\alpha$ appears again.

The following lemma indicates that the conflict between different metrics is very common.
\begin{lemma} \label{lem:conflict}\label{lem:finiteorderconflict}
$\forall \alpha\geq0,\beta\geq0,\alpha\neq\beta$, there exists two SISO deterministic programs $C_1 = (\mqA, \mA, F_{\mA},\mpA)$ and $C_2 = (\mqA', \mA, F'_{|\mathcal{A}|},\mpA')$ with $\mqA$ and $\mqA'$ both being uniform distributions in $\mathcal{A}$, such that $\exists D\in R^+$, if $\mA>D$,
\begin{eqnarray}
IL_\alpha(C_1,\mA) &>& IL_\alpha(C_2,\mA) \label{eqn:C1>C2}\\
IL_\beta(C_1,\mA)&<&IL_\beta(C_2,\mA)\label{eqn:C1<C2}
\end{eqnarray}
\end{lemma}
\begin{proof}
The key idea to construct the programs stem from the following property of Renyi entropy $H_\alpha(\mathbf{p})$: $H_\alpha(\mathbf{p})$ is a monotone decreasing function of $\alpha$ for any specific $\mathbf{p}$. Moreover, if $\mathbf{p}$ is uniform, $H_\alpha(\mathbf{p})$ is a constant (independent of $\alpha$); if $\mathbf{p}$ contains a peak probability and a large number of small probabilities, $H_\alpha(\mathbf{p})$ will decreasing quickly as $\alpha$ increases (see \cite{rrnyi-measures} for details). 

First, let us suppose $1<\alpha<\beta\leq\infty$. Pick values $p_0\in(0,1),n\in N^+$ such that
\begin{eqnarray}
{1\over 2^{1-1/\beta}}<&p_0&<{1\over 2^{1-1/\alpha}}\label{eqn:q_0}\\
\log\left[p_0^\beta+{(1-p_0)^\beta\over n^{\beta-1}}\right]&>&1-\beta\label{eqn:beta} \\
 \log\left[p_0^\alpha+{(1-p_0)^\alpha\over n^{\alpha-1}}\right]&<&1-\alpha \label{eqn:alpha}
\end{eqnarray}
We note that one can first pick $p_0$ satisfying $\eqref{eqn:q_0}$; then, to satisfy $ \eqref{eqn:beta}$ and $\eqref{eqn:alpha}$ one simply needs to choose a sufficiently large value for $n$.

Specify the mapping function $F_{\mA}$ for $C_1$ so that the distribution of $O$ is $\mpA = (p_0,p_1,...,p_n)$ with $p_0$ chosen as described above and $p_1=...=p_n={1-p_0\over n}$ for any $\mA>n+1$, and specify the mapping function $F'_{\mA}$ for $C_2$ so that the distribution of $O'$ is $\mpA' = (1/2,1/2)$ for any $\mA$. Then, for any $\mA>n+1$, 
\begin{eqnarray*}
IL_\alpha(C_1,\mA) = H_\alpha(\mpA) = {1\over 1-\alpha}\log\left[ p_0^\alpha+{(1-p_0)^\alpha\over n^{\alpha-1}}\right] &>&1 = H_\alpha(\mpA')= IL_\alpha(C_2,\mA), \\
IL_\beta(C_1,\mA) = H_\beta(\mpA) = {1\over 1-\beta}\log\left[ p_0^\beta+{(1-p_0)^\beta\over n^{\beta-1}}\right] &<&1 = H_\beta(\mpA')= IL_\beta(C_2,\mA),
\end{eqnarray*}
equations $\eqref{eqn:C1>C2}$ and $\eqref{eqn:C1<C2}$ are satisfied.

In the case that $1<\beta<\alpha\leq\infty$, switch the mapping function of $F_{\mA}$ and $F'_{\mA}$ above, so that the distribution of $O_{\mA}$ is $\mathbf{q}$ and the distribution of $O'_{\mA}$ is $\mathbf{p}$, then $\eqref{eqn:C1>C2}$ and $\eqref{eqn:C1<C2}$ are still valid.

Second, suppose $0\leq\alpha<\beta< 1$, pick $2\leq m,n\in N^+$ so that 
\begin{eqnarray}
{1\over 1-\alpha}\log\left[ ({1\over 2})^\alpha+{({1\over2})^\alpha n^{1-\alpha}}\right] > \log m > {1\over 1-\beta}\log\left[ ({1\over 2})^\beta+{({1\over2})^\beta n^{1-\beta}}\right] \label{eqn:alphaless1}
\end{eqnarray}
A sufficiently large $n$ for \eqref{eqn:alphaless1} can make it possible to choose a valid $m$.

Specify the mapping function $F_{\mA}$ for $C_1$ so that the distribution of $O$ is $\mpA = (p_0,p_1,...,p_n)$ with $p_0={1\over2}$ and $p_1=...=p_n={1\over 2n}$ for any $\mA>n+1$, and specify the mapping function $F'_{\mA}$ for $C_2$ so that the distribution of $O'$ is $\mpA' = (1/m,1/m,...1/m)$ for any $\mA$. Then, for any $\mA>n+1$, equations $\eqref{eqn:C1>C2}$ and $\eqref{eqn:C1<C2}$ are satisfied. The case of $0\leq\beta<\alpha<1$ can be proved by switching $F_{\mA}$ and $F'_{\mA}$ as done before.

Third, suppose $\alpha$ and $\beta$ belong to $[0,1]$ and $[1,\infty]$ separately. For example, if $\alpha<1\leq\beta$, pick a value $\beta'$ such that $\alpha<\beta'<1$, and construct $C_1$ and $C_2$ by the same method above, with $\beta'$ in place of $\beta$. Then $\eqref{eqn:C1>C2}$ and $\eqref{eqn:C1<C2}$ can be satisfied because $IL_{\beta'}(C_2,\mA)=IL_{\beta}(C_2,\mA)=IL_\alpha(C_2,\mA)$ as $\mpA'$ is uniform. Equations $\eqref{eqn:C1>C2}$ and $\eqref{eqn:C1<C2}$ in other case of $\alpha$ and $\beta$ can be justified in the same way.

\end{proof}


\section{Quantifying Information Leakage in FOPs} \label{sec:solution}




So far we have shown that some measures of information leakage are not only counter-intuitive, but also introduce conflicts when comparing two programs. In this section we develop a new approach to quantify and compare information leakage in programs. We first sketch the key idea behind our approach. Recall that in FOPs, $\mA$ acts as a security parameter for the program $-$ intuitively, increasing $\mA$ increases the security level of the program (since, $\mO$ is finite and constant $-$ independent of $\mA$). Recall the password checker PROG P3 $-$ observe that increasing the length of the password ($A$) by one bit doubles the security level of the program. 

In this paper we propose that two FOPs $C_1$ and $C_2$ should be compared by examining $\lim_{\mA \rightarrow \infty} IL_\alpha(C_1, \mA)$ $\lim_{\mA \rightarrow \infty} IL_\alpha(C_2, \mA)$. In particular, we show that one can obtain conflict free comparison of programs using a relative leakage metric defined by the ratio $\lim_{\mA \rightarrow \infty} \frac{IL_\alpha(C_1, \mA)}{IL_\alpha(C_2, \mA)}$. Evidently, if the relative leakage metric is 0, then program $C_2$ leaks more information than program $C_1$; if the relative leakage metric is $\infty$, then program $C_1$ leaks more information than program $C_2$. Now, if the relative leakage metric of programs $C_1$ and $C_2$ is a constant $c$ ($c$ $\not=$ 0, $\infty$), one may increase the size of the secret input (namely, $\log \mA$) for program $C_1$ by a constant factor relative to the size of the secret input for program $C_2$ to ensure that the programs $C_1$ and $C_2$ have equal security level; hence, in this case we conclude that the programs $C_1$ and $C_2$ are equal with respect to information leakage. In this section, we formalize this intuition and present a conflict-free approach to comparing information leakage in FOPs.

We first show that for any $C = (\mqA, \mA, F_{\mA},\mpA)$, $IL_{\alpha}(C,\mA)$ is closely related to $||\mpA||_\infty$. 

\begin{lemma} \label{lem:levels}
$\forall 2\leq n\in N$, for any probability distribution vector $\mathbf{p} = (p_1,p_2,...p_n)$ with ordered sequence $||\mathbf{p}||_\infty=p_1\geq p_2\geq...\geq p_n$, then,
\begin{eqnarray} 
\forall 1<\alpha\leq\infty, \lim_{p_1\rightarrow1}{H_\alpha(\mathbf{p})\over 1-p_1} = \frac{\alpha}{\alpha-1}\label{eqn:Hinf}\\
\lim_{p_1\rightarrow1}{H_1(\mathbf{p})\over -(1-p_1)\log(1-p_1)}=1 \label{eqn:H1}\\
\forall \alpha\in(0,1), 
\begin{cases}
\liminf_{p_1\rightarrow1}{H_\alpha(\mathbf{p})\over (1-p_1)^\alpha}>0\\ \limsup_{p_1\rightarrow1}{H_\alpha(\mathbf{p})\over (1-p_1)^\alpha}<\infty  \label{eqn:H0}
\end{cases}
\end{eqnarray}
\end{lemma}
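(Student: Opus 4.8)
The plan is to write $\epsilon = 1 - p_1 = \sum_{i \ge 2} p_i$, so that $p_1 \to 1$ is equivalent to $\epsilon \to 0^+$ and the tail mass equals $\epsilon$, and then to split every Renyi sum into its head term $i = 1$ and its tail $i \ge 2$. The crucial preliminary observation is that each tail entry obeys $p_i \le \sum_{j \ge 2} p_j = \epsilon$ for $i \ge 2$, which lets me control the tail uniformly by $\epsilon$. Since the three regimes $1 < \alpha \le \infty$, $\alpha = 1$, and $0 < \alpha < 1$ behave qualitatively differently, I would treat them in turn.

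For $1 < \alpha < \infty$ I would write $\sum_i p_i^\alpha = (1-\epsilon)^\alpha + \sum_{i \ge 2} p_i^\alpha$. Using $p_i \le \epsilon$ on the tail gives $\sum_{i \ge 2} p_i^\alpha \le \epsilon^{\alpha-1}\sum_{i \ge 2} p_i = \epsilon^\alpha = o(\epsilon)$ because $\alpha > 1$. Combined with $(1-\epsilon)^\alpha = 1 - \alpha\epsilon + O(\epsilon^2)$ and $\log(1+x) = x + O(x^2)$, this yields $H_\alpha(\mathbf{p}) = \frac{1}{1-\alpha}(-\alpha\epsilon + o(\epsilon)) = \frac{\alpha}{\alpha-1}\epsilon + o(\epsilon)$, so dividing by $1 - p_1 = \epsilon$ gives the claimed limit $\frac{\alpha}{\alpha-1}$. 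The boundary case $\alpha = \infty$ is immediate: $H_\infty(\mathbf{p}) = -\log p_1 = -\log(1-\epsilon) = \epsilon + O(\epsilon^2)$, whose ratio to $\epsilon$ tends to $1$, consistent with $\frac{\alpha}{\alpha-1} \to 1$.

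For $\alpha = 1$ the tail is no longer negligible and in fact carries the leading term. I would factor the tail entropy by setting $p_i = \epsilon q_i$ with $\sum_{i \ge 2} q_i = 1$, giving $-\sum_{i \ge 2} p_i \log p_i = -\epsilon\log\epsilon + \epsilon H_1(\mathbf{q})$, where $H_1(\mathbf{q}) \le \log(n-1)$ is bounded. Adding the head contribution $-p_1 \log p_1 = \epsilon + O(\epsilon^2)$ gives $H_1(\mathbf{p}) = -\epsilon\log\epsilon + \epsilon(1 + H_1(\mathbf{q})) + O(\epsilon^2)$. Dividing by $-(1-p_1)\log(1-p_1) = -\epsilon\log\epsilon$ and noting that $H_1(\mathbf{q})$ stays bounded while $-\log\epsilon \to \infty$, the correction term vanishes and the ratio tends to $1$.

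The main obstacle is the regime $0 < \alpha < 1$, where the tail dominates the linear head correction (because $\epsilon^\alpha \gg \epsilon$) and where one cannot pin down an exact limit, only liminf/limsup bounds. Here I would sandwich the tail sum: superadditivity of $x \mapsto x^\alpha$ for $0 < \alpha < 1$ gives the lower bound $\sum_{i \ge 2} p_i^\alpha \ge (\sum_{i \ge 2} p_i)^\alpha = \epsilon^\alpha$, while the power-mean (Holder) inequality over at most $n-1$ terms gives $\sum_{i \ge 2} p_i^\alpha \le (n-1)^{1-\alpha}\epsilon^\alpha$. Since $(1-\epsilon)^\alpha = 1 + O(\epsilon)$ and $\epsilon^\alpha$ dominates $\epsilon$, we get $\sum_i p_i^\alpha = 1 + \Theta(\epsilon^\alpha)$, hence $\log\sum_i p_i^\alpha = \Theta(\epsilon^\alpha)$ with the same sandwiching constants up to a factor $1 + o(1)$. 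Multiplying by $\frac{1}{1-\alpha} > 0$ and dividing by $\epsilon^\alpha$ then yields $\liminf \ge \frac{1}{1-\alpha} > 0$ and $\limsup \le \frac{(n-1)^{1-\alpha}}{1-\alpha} < \infty$, which is exactly the two-sided bound claimed. The delicate point is that these bounds must hold uniformly over all admissible tails with the given $p_1$, which is precisely why the liminf/limsup statement (rather than a genuine limit) is the correct one in this range.
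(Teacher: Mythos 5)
Your proof is correct and follows essentially the same route as the paper's: the same head/tail decomposition with the tail normalized by $t=1-p_1$ (your $q_i = p_i/\epsilon$ are exactly the paper's $p_i/t$), the same uniform bounds on the normalized tail sums in each regime, and the same first-order expansion of the logarithm to extract the limit (for $\alpha>1$ and $\alpha=1$) or the two-sided $\liminf$/$\limsup$ bounds (for $0<\alpha<1$). The only cosmetic difference is that you derive the tail bounds via $p_i\le\epsilon$, superadditivity, and Jensen, while the paper states them directly as $\tfrac{1}{(n-1)^{\alpha-1}}\le\sum_{i\ge2}(p_i/t)^\alpha\le1$ for $\alpha>1$, $1\le\sum_{i\ge2}(p_i/t)^\alpha\le(n-1)^{1-\alpha}$ for $\alpha<1$, and $\bigl|\sum_{i\ge2}(p_i/t)\log(p_i/t)\bigr|\le\log(n-1)$ for $\alpha=1$; these are equivalent.
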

\begin{proof}
Consider \eqref{eqn:Hinf}, use substitution $t=1-p_1$. When $\alpha=\infty$, 
\begin{eqnarray*}
\lim_{p_1\rightarrow1}{H_\infty(\mathbf{p})\over 1-p_1} &=& \lim_{t\rightarrow0}{-\log(1-t)\over t} = 1 \\
\end{eqnarray*}
When $1<\alpha<\infty$, note that $\forall \mathbf{p}$ with $n\geq 2$, $${1\over(n-1)^{\alpha-1}}\leq \sum_{i=2}^{n}(p_i/t)^\alpha\leq 1,$$ so we have
\begin{eqnarray*}
\lim_{p_1\rightarrow1}{H_\alpha(\mathbf{p})\over 1-p_1}&=& {1\over 1-\alpha}\lim_{t\rightarrow0}{(1-t)^\alpha-1+\sum_{i=2}^{n}p_i^\alpha \over t}\\
&=&{1\over 1-\alpha}\left(-\alpha+ \lim_{t\rightarrow0}t^{\alpha-1}\sum_{i=2}^{n}(p_i/t)^\alpha\right) \\
&=& {\alpha\over\alpha-1}
\end{eqnarray*}
Thus, equation \eqref{eqn:Hinf} holds. Next consider \eqref{eqn:H1}.

When $\alpha=1$, note that $\forall \mathbf{p}$ with $n\geq2$, $$0\leq\left|\sum_{i=2}^{n} {p_i\over t}\log{p_i\over t} \right|\leq \log(n-1),$$and we have
\begin{eqnarray*}
&&\lim_{p_1\rightarrow1}{H_1(\mathbf{p})\over (1-p_1)\log(1-p_1)} \\
&=& \lim_{t\rightarrow0}{(1-t)\log(1-t)+\sum_{i=2}^{n}p_i\log p_i\over -t\log t} \\
&=& 1-\lim_{t\rightarrow0}{1\over\log t}\sum_{i=2}^{n}\left({p_i\over t}\log{p_i\over t}  \right) =1
\end{eqnarray*}
Thus, equation \eqref{eqn:H1} holds. Next consider \eqref{eqn:H0}.

When $0<\alpha<1$, note that $\forall \mathbf{p}$ with $n\geq2$, $$1\leq \sum_{i=2}^{n}(p_i/t)^\alpha\leq {(n-1)^{1-\alpha}},$$ so we have
\begin{eqnarray*}
&&\limsup_{p_1\rightarrow1}{H_\alpha(\mathbf{p})\over (1-p_1)^\alpha} \\
&\leq& {1\over 1-\alpha}\left({\lim_{t\rightarrow0} -\alpha t^{1-\alpha}+\limsup_{t\rightarrow0}\sum_{i=2}^{n}(p_i/t)^\alpha } \right) \\
&=& {1\over 1-\alpha}\limsup_{t\rightarrow0}\sum_{i=2}^{n}(p_i/t)^\alpha \leq {(n-1)^{1-\alpha}\over 1-\alpha} <\infty.
\end{eqnarray*}
The other part of \eqref{eqn:H0} can be proved in the same way.
\end{proof}

Define a function $T_\alpha(\cdot)$ for random distributions $\mathbf{p}=(p_1,p_2,...p_n)$ with $p_1\geq p_2\geq...\geq p_n$:
\begin{eqnarray}
T_\alpha(\mathbf{p}) =
\begin{cases}
1-p_1, ~if~\alpha>1\\
-[1-p_1]\log[1-p_1], ~if~\alpha=1\label{eqn:T}\\
[1-p_1]^\alpha,~if~0<\alpha<1
\end{cases}
\end{eqnarray}
Lemma \ref{lem:levels} shows that $\forall \alpha>0, \lim_{||\mathbf{p}||_\infty\rightarrow 1} {H_\alpha(\mathbf{p}) \over T_\alpha(\mathbf{p})}$ is finite. Further, if $2\leq n$ is finite and $\frac{1}{n}\leq ||\mathbf{p}||_\infty<1-\epsilon$ for some $\epsilon>0$, both $H_\alpha(\mathbf{p})$ and $T_\alpha(\mathbf{p})$ will be upper bounded by $\log n$  $<$ $\infty$ and will both be strictly larger than zero. This leads us to Proposition \ref{prop:finite}.

\begin{prop}\label{prop:finite}
For any FOP $C = \{\mqA,\mA, F_{\mA},\mpA\}$ with $\mqA$ being uniform in $\mathcal{A}$, we have
\begin{eqnarray*}
\forall \alpha>0,0<\inf_{\mA} {IL_\alpha(C,\mA)\over T_\alpha(\mpA)}\leq\sup_{\mA}{IL_\alpha(C,\mA)\over T_\alpha(\mpA)}<\infty
\end{eqnarray*}
\end{prop}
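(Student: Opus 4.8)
The plan is to exploit the finite-order hypothesis to reduce the claim to a statement about the single ratio $R_\alpha(\mathbf{p}) := H_\alpha(\mathbf{p})/T_\alpha(\mathbf{p})$ on a fixed compact domain, and then to control $R_\alpha$ in two regimes separated by how close $\|\mathbf{p}\|_\infty$ is to $1$. First I would set $N := \sup_\mA \|\mpA\|_0 < \infty$, which is finite precisely because $C$ is an FOP, and regard each output distribution $\mpA$ (after sorting its entries in decreasing order and padding with zeros) as a point of the compact set $K = \{\mathbf{p}\in\mathbb{R}^N : p_1\ge p_2\ge\cdots\ge p_N\ge 0,\ \sum_i p_i = 1\}$. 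Since $IL_\alpha(C,\mA)=H_\alpha(\mpA)$ and $T_\alpha$ both depend only on the sorted vector (zeros contribute nothing to either for $\alpha>0$), it suffices to bound $R_\alpha$ away from $0$ and from $\infty$ on $\{\mathbf{p}\in K : p_1 < 1\}$; the degenerate distributions with $p_1=1$ (constant output, $\|\mpA\|_0=1$) make numerator and denominator vanish simultaneously and are excluded from the ratio. Note that every $\mathbf{p}\in K$ satisfies $p_1\ge 1/N$ automatically, since $Np_1\ge\sum_i p_i=1$.

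Next I would fix a small $\epsilon>0$ and split $K\cap\{p_1<1\}$ into a \emph{bulk} region $K_\epsilon=\{\mathbf{p}\in K : p_1\le 1-\epsilon\}$ and a \emph{peak} region $\{\mathbf{p}\in K : 1-\epsilon<p_1<1\}$. On $K_\epsilon$ the argument is soft: $K_\epsilon$ is compact (a closed subset of $K$), and on it both $T_\alpha$ and $H_\alpha$ are continuous and strictly positive — $T_\alpha$ because $1-p_1\ge\epsilon>0$, and $H_\alpha$ (with the usual conventions $0\log 0=0$ and $H_\infty=-\log p_1$) because $p_1<1$ means the distribution is not a point mass — so by compactness $R_\alpha$ attains a positive minimum $c_1$ and a finite maximum $C_1$. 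This is exactly the observation flagged before the proposition: on $\tfrac1n\le\|\mathbf{p}\|_\infty<1-\epsilon$ both $H_\alpha$ and $T_\alpha$ lie strictly between $0$ and $\log n\le\log N$. On the peak region I would invoke Lemma \ref{lem:levels}, which identifies $\lim_{p_1\to1}H_\alpha/T_\alpha$ as a finite, strictly positive quantity in each case ($\tfrac{\alpha}{\alpha-1}$ for $1<\alpha\le\infty$, $1$ for $\alpha=1$, and a value trapped strictly between $0$ and $\infty$ for $0<\alpha<1$). Choosing $\epsilon$ small enough then confines $R_\alpha$ to a positive finite interval $[c_2,C_2]$ throughout $1-\epsilon<p_1<1$; combining the two regions gives the uniform two-sided bound $\min(c_1,c_2)\le R_\alpha\le\max(C_1,C_2)$, which transfers directly to $\inf_\mA$ and $\sup_\mA$ of $IL_\alpha(C,\mA)/T_\alpha(\mpA)$.

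I expect the main obstacle to be the \emph{uniformity} of the peak-region estimate. Lemma \ref{lem:levels} is stated as a limit for a single distribution vector of \emph{fixed} length $n$, whereas here the support size $n=\|\mpA\|_0$ may change with $\mA$ and the small masses $p_2,\dots,p_n$ may be arranged in many ways; I therefore need the convergence of $R_\alpha$ to its finite positive limit to hold uniformly over all admissible supports $2\le n\le N$ and all configurations, so that a single $\epsilon$ works for the entire family $\{\mpA\}_\mA$. The way around this is to re-examine the estimates inside the proof of Lemma \ref{lem:levels} and observe that, writing $t=1-p_1$, they depend on $n$ only through harmless factors such as $\sum_{i=2}^n (p_i/t)^\alpha\in[(n-1)^{1-\alpha},1]$ (for $\alpha>1$), $\sum_{i=2}^n (p_i/t)^\alpha\in[1,(n-1)^{1-\alpha}]$ (for $0<\alpha<1$), or $|\sum_{i=2}^n (p_i/t)\log(p_i/t)|\le\log(n-1)$ (for $\alpha=1$). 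Replacing $n$ by the global bound $N$ makes every such factor uniform and the corresponding error terms $o(t)$ in the expansions uniform as well, which yields the uniform peak-region bounds and completes the proof.
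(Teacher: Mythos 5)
Your proposal is correct and takes essentially the same route as the paper: the paper's own justification (the paragraph preceding Proposition \ref{prop:finite}) is exactly your two-region argument, using Lemma \ref{lem:levels} for the peak region $||\mathbf{p}||_\infty \rightarrow 1$ and the fact that $H_\alpha$ and $T_\alpha$ are both trapped strictly between $0$ and $\log n$ on the bulk region $\tfrac{1}{n}\leq ||\mathbf{p}||_\infty \leq 1-\epsilon$. Your explicit treatment of uniformity over support sizes $2\leq n\leq N$ (via the FOP bound $N=\sup_{\mA}||\mpA||_0<\infty$) fills in a detail the paper leaves implicit, but the underlying argument is the same.
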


Proposition \ref{prop:finite} states that as $\alpha$ is varied, the values of $IL_\alpha$ differ among the levels: $1-||\mpA||_\infty$, $\left(1-||\mpA||_\infty\right)\log\left(1-||\mpA||_\infty\right)$, $\left(1-||\mpA||_\infty\right)^\alpha$. Note that these levels are all related to $1-||\mpA||_\infty$. Intuitively, the rate of convergence of $1-||\mpA||_\infty$ to $0$ determines the security level of a program. We formalize this notion in the following proposition:
\begin{prop}\label{prop:compare}
For any FOPs $C_1 = (\mqA, \mA, F_{\mA},\mpA)$ and $C_2 = (\mqA', \mA, F'_{|\mathcal{A}|},\mpA')$,  with $\mqA$ and $\mqA'$ both being uniform in $\mathcal{A}$. Applying notation $$f_\alpha =\limsup_{|\mathcal{A}|\rightarrow\infty}{IL_\alpha(C_1,\mA)\over IL_\alpha(C_2,\mA)}, ~g_\alpha = \liminf_{|\mathcal{A}|\rightarrow\infty}{IL_\alpha(C_1,\mA)\over IL_\alpha(C_2,\mA)},$$  we have:
\begin{enumerate}
\item 
\begin{eqnarray}
\begin{cases} \label{eqn:f_alpha}
\exists \alpha>0,f_\alpha = 0 \Leftrightarrow \forall\beta>0,f_\beta= 0 \\
\displaystyle\exists \alpha>0, f_\alpha= \infty \Leftrightarrow \forall\beta>0,f_\beta = \infty \\
\displaystyle\exists \alpha>0, 0<f_\alpha<\infty\Leftrightarrow \forall\beta>0, 0<f_\beta < \infty
\end{cases}
\end{eqnarray}
\item 
\begin{eqnarray}
\begin{cases}\label{eqn:g_alpha}
\exists \alpha>0,g_\alpha= 0 \Leftrightarrow \forall\beta>0,g_\beta= 0 \\
\exists \alpha>0, g_\alpha= \infty \Leftrightarrow \forall\beta>0, g_\beta = \infty \\
\exists \alpha>0, 0<g_\alpha<\infty\Leftrightarrow \forall\beta>0, 0<g_\beta< \infty
\end{cases}
\end{eqnarray}
\end{enumerate}
\end{prop}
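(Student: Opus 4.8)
The plan is to collapse everything onto a single scalar sequence. Since $C_1,C_2$ are FOPs, set $N=\max\{\sup_{\mA}\|\mpA\|_0,\ \sup_{\mA}\|\mpA'\|_0\}<\infty$; each output law then has at most $N$ atoms, so $\|\mpA\|_\infty\ge 1/N$ and likewise for $\mpA'$. Writing $u_{\mA}=1-\|\mpA\|_\infty$ and $v_{\mA}=1-\|\mpA'\|_\infty$, both lie in $(0,1-1/N]$ for all large $\mA$ (we assume each program leaks eventually, as otherwise the ratios defining $f_\alpha,g_\alpha$ are undefined); in particular they are bounded away from $1$, whence $-\log u_{\mA},-\log v_{\mA}\ge\delta':=-\log(1-1/N)>0$. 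Put $r_{\mA}=u_{\mA}/v_{\mA}$. The goal is to show that, for every $\alpha>0$, the $\limsup$ (resp.\ $\liminf$) of $IL_\alpha(C_1,\mA)/IL_\alpha(C_2,\mA)$ is $0$, is $\infty$, or is finite positive exactly when the corresponding quantity for $r_{\mA}$ is; this $\alpha$-invariance is precisely \eqref{eqn:f_alpha} and \eqref{eqn:g_alpha}.

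First I would strip off the entropies using the level function $T_\alpha$ of \eqref{eqn:T}. Applying Proposition~\ref{prop:finite} to $C_1$ and to $C_2$ gives $\mA$-independent constants $0<c_\alpha(C_i)\le C_\alpha(C_i)<\infty$ with $c_\alpha(C_i)\le IL_\alpha(C_i,\mA)/T_\alpha(\cdot)\le C_\alpha(C_i)$, so dividing the two chains yields
\begin{eqnarray*}
\frac{c_\alpha(C_1)}{C_\alpha(C_2)}\cdot\frac{T_\alpha(\mpA)}{T_\alpha(\mpA')} & \le & \frac{IL_\alpha(C_1,\mA)}{IL_\alpha(C_2,\mA)}\ \le\ \frac{C_\alpha(C_1)}{c_\alpha(C_2)}\cdot\frac{T_\alpha(\mpA)}{T_\alpha(\mpA')}.
\end{eqnarray*}
Because the multiplicative factors here stay in a fixed interval $[c,C]\subset(0,\infty)$ independent of $\mA$, the sequences $IL_\alpha(C_1,\mA)/IL_\alpha(C_2,\mA)$ and $T_\alpha(\mpA)/T_\alpha(\mpA')$ have identical trichotomous behaviour for both $\limsup$ and $\liminf$. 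Hence it suffices to prove that the type of the $T_\alpha$-ratio is independent of $\alpha$.

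By \eqref{eqn:T}, for every $\alpha>1$ the ratio equals $u_{\mA}/v_{\mA}=r_{\mA}$ with no $\alpha$-dependence, and for $0<\alpha<1$ it equals $r_{\mA}^{\,\alpha}$. Since $t\mapsto t^\alpha$ is an increasing homeomorphism of $[0,\infty]$ fixing $0$ and $\infty$, we get $\limsup r_{\mA}^{\,\alpha}=(\limsup r_{\mA})^\alpha$ and $\liminf r_{\mA}^{\,\alpha}=(\liminf r_{\mA})^\alpha$, so the type for every $\alpha\ne 1$ coincides with the type of $r_{\mA}$. This handles all $\alpha\ne1$ uniformly.

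The remaining, genuinely delicate case is $\alpha=1$, where $T_1(\mpA)/T_1(\mpA')=(-u_{\mA}\log u_{\mA})/(-v_{\mA}\log v_{\mA})$ is \emph{not} a function of $r_{\mA}$ alone. Writing $u_{\mA}=r_{\mA}v_{\mA}$ and $L_{\mA}=-\log v_{\mA}\ge\delta'$ gives
\begin{eqnarray*}
\frac{-u_{\mA}\log u_{\mA}}{-v_{\mA}\log v_{\mA}} & = & r_{\mA}\Big(1-\frac{\log r_{\mA}}{L_{\mA}}\Big),
\end{eqnarray*}
the parenthesis being positive since $-\log u_{\mA}=L_{\mA}-\log r_{\mA}\ge\delta'$; using only $L_{\mA}\ge\delta'$ this produces the $\mA$-uniform sandwich
\begin{eqnarray*}
\frac{\delta'\,r_{\mA}}{\delta'+|\log r_{\mA}|} & \le & \frac{-u_{\mA}\log u_{\mA}}{-v_{\mA}\log v_{\mA}}\ \le\ r_{\mA}\Big(1+\frac{|\log r_{\mA}|}{\delta'}\Big).
\end{eqnarray*}
Both bounds $g_1(t)=\delta' t/(\delta'+|\log t|)$ and $g_2(t)=t(1+|\log t|/\delta')$ tend to $0$ as $t\to0$, tend to $\infty$ as $t\to\infty$, and are bounded on every $(0,K]$; thus $g_i(r_{\mA})\to0\Leftrightarrow r_{\mA}\to0$, $g_i(r_{\mA})$ is bounded $\Leftrightarrow r_{\mA}$ is bounded, and $g_i(r_{\mA})\to\infty\Leftrightarrow r_{\mA}\to\infty$. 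Squeezing, the $\alpha=1$ ratio inherits the same three alternatives for its $\limsup$ and $\liminf$ as $r_{\mA}$, matching the $\alpha\ne1$ cases and completing \eqref{eqn:f_alpha}; replacing $\limsup$ by $\liminf$ throughout gives \eqref{eqn:g_alpha} verbatim. I expect this $\alpha=1$ step to be the main obstacle: because $-t\log t$ is neither monotone nor homogeneous on $(0,1-1/N]$, a naive factorisation hits an indeterminate $0\cdot\infty$ as $u_{\mA}\to0$, and it is exactly the FOP-forced bound $L_{\mA}\ge\delta'$ that rescues the squeeze.
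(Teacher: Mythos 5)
Your proof is correct, and its skeleton is the same as the paper's: both arguments first invoke Proposition~\ref{prop:finite} to replace $IL_\alpha(C_i,\mA)$ by the level function $T_\alpha$ of \eqref{eqn:T} up to $\mA$-uniform positive multiplicative constants, and then argue that the type (zero, finite positive, or infinite) of the $\limsup$ and $\liminf$ of $T_\alpha(\mpA)/T_\alpha(\mpA')$ does not depend on $\alpha$. Where you genuinely differ is in how that second step is justified, and your version is stronger. The paper dispatches it with an unproved ``Note that'' asserting, for \emph{arbitrary} intervals $T,S\subset(0,1)$, that the ratios $(1-t)/(1-s)$, $\big[(1-t)\log(1-t)\big]/\big[(1-s)\log(1-s)\big]$ and $(1-t)^\beta/(1-s)^\beta$ have $\limsup$ (resp.\ $\liminf$) equal to $0$ or $\infty$ simultaneously. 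As literally stated this is false: taking $s_k=1/k$ and $t_k=1-1/k$ gives $(1-t_k)/(1-s_k)\rightarrow 0$ while $\big[(1-t_k)\log(1-t_k)\big]/\big[(1-s_k)\log(1-s_k)\big]\sim\log k\rightarrow\infty$; the equivalence holds only when $t$ and $s$ are bounded away from $0$, i.e.\ when $1-t$ and $1-s$ are bounded away from $1$. That restriction is exactly what your proof makes explicit: finite order forces $||\mpA||_\infty,||\mpA'||_\infty\geq 1/N$, hence $-\log u_{\mA},-\log v_{\mA}\geq\delta'>0$, and your two-sided bound $\delta' r_{\mA}/(\delta'+|\log r_{\mA}|)\leq r_{\mA}(1-\log r_{\mA}/L_{\mA})\leq r_{\mA}(1+|\log r_{\mA}|/\delta')$ turns the problematic $\alpha=1$ case into a clean squeeze. (One wording fix: your lower bound does not follow ``using only $L_{\mA}\geq\delta'$''; for $r_{\mA}>1$ it is equivalent to $-\log u_{\mA}=L_{\mA}-\log r_{\mA}\geq\delta'$, which you did establish earlier, so nothing is missing, but the sentence should cite that inequality rather than only $L_{\mA}\geq\delta'$.) In short, your argument supplies the proof of the one step the paper merely asserts, and it identifies precisely where the FOP hypothesis is used to make that assertion true.
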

\begin{proof}

It follows directly from Proposition \ref{prop:finite} that for any $\alpha>0$,
\begin{eqnarray*}
f_\alpha = 0\Leftrightarrow \limsup_{\mA\rightarrow\infty}{T_\alpha(\mpA)\over T_\alpha(\mpA')} = 0, f_\alpha=\infty\Leftrightarrow \limsup_{\mA\rightarrow\infty}{T_\alpha(\mpA)\over T_\alpha(\mpA')} = \infty \\
0<f_\alpha<\infty \Leftrightarrow 0<\limsup_{\mA\rightarrow\infty}{T_\alpha(\mpA)\over T_\alpha(\mpA')} <\infty\\
g_\alpha = 0\Leftrightarrow \liminf_{\mA\rightarrow\infty}{T_\alpha(\mpA)\over T_\alpha(\mpA')} = 0, g_\alpha=\infty\Leftrightarrow \liminf_{\mA\rightarrow\infty}{T_\alpha(\mpA)\over T_\alpha(\mpA')} = \infty \\
0<g_\alpha<\infty \Leftrightarrow 0<\liminf_{\mA\rightarrow\infty}{T_\alpha(\mpA)\over T_\alpha(\mpA')} <\infty\\
\end{eqnarray*}

Note that for any intervals $T, S\subset(0,1)$ and any variables $t\in T, s\in S$, $\forall v\in \{0,\infty\}$,
\begin{eqnarray*}
\limsup_{t\in T,s\in S}{1-t\over 1-s} = v \Leftrightarrow \limsup_{t\in T,s\in S}{(1-t)\log(1-t)\over (1-s)\log(1-s)}=v \Leftrightarrow \limsup_{t\in T,s\in S}{(1-t)^\beta\over (1-s)^\beta}=v,\forall \beta\in(0,1), \\
\liminf_{t\in T,s\in S}{1-t\over 1-s} = v \Leftrightarrow \liminf_{t\in T,s\in S}{(1-t)\log(1-t)\over (1-s)\log(1-s)}=v \Leftrightarrow \liminf_{t\in T,s\in S}{(1-t)^\beta\over (1-s)^\beta}=v,\forall \beta\in(0,1),
\end{eqnarray*}
which indicates that,
\begin{eqnarray*}
\exists \alpha>0, \limsup_{\mA\rightarrow\infty}{T_\alpha(\mpA)\over T_\alpha(\mpA')} = v \Leftrightarrow \forall \beta>0,\limsup_{\mA\rightarrow\infty}{T_\alpha(\mpA)\over T_\alpha(\mpA')} = v \\
\exists \alpha>0, 0<\limsup_{\mA\rightarrow\infty}{T_\alpha(\mpA)\over T_\alpha(\mpA')} < \infty \Leftrightarrow \forall \beta>0,0<\limsup_{\mA\rightarrow\infty}{T_\alpha(\mpA)\over T_\alpha(\mpA')} < \infty \\
\exists \alpha>0, \liminf_{\mA\rightarrow\infty}{T_\alpha(\mpA)\over T_\alpha(\mpA')} = v \Leftrightarrow \forall \beta>0,\liminf_{\mA\rightarrow\infty}{T_\alpha(\mpA)\over T_\alpha(\mpA')} = v \\
\exists \alpha>0, 0<\liminf_{\mA\rightarrow\infty}{T_\alpha(\mpA)\over T_\alpha(\mpA')} <\infty \Leftrightarrow \forall \beta>0,0<\liminf_{\mA\rightarrow\infty}{T_\alpha(\mpA)\over T_\alpha(\mpA')} <\infty 
\end{eqnarray*}
So we conclude that \eqref{eqn:f_alpha} and \eqref{eqn:g_alpha} are valid.
\end{proof}

Now, we are ready to present our solution to compare information leakage of two programs: 
\begin{algorithm}\label{algo:ComparingGeneral}
For any FOPs $C_1=(\mqA, \mA, F_{\mA},\mpA)$ and $C_2 = (\mqA', \mA, F'_{|\mathcal{A}|},\mpA')$,  with $\mqA$ and $\mqA'$ both being uniform in $\mathcal{A}$,
\begin{algorithmic}[H]
\STATE BEGIN PROGRAM
\IF {$f_\infty=\infty$ and $g_\infty>0$} \STATE $C_1$ has a higher leakage than $C_2$. 
\ELSIF{$f_\infty<\infty$ and $g_\infty=0$} \STATE$C_2$ has a higher leakage than $C_1$
\ELSIF {$0<g_\infty\leq f_\infty<\infty$}
\STATE $C_1$ and $C_2$ are on the same leakage level.
\ELSE \STATE $C_1$ and $C_2$ are not comparable.
\ENDIF
\STATE END PROGRAM.
\end{algorithmic}
\end{algorithm}
If ${IL_\infty(C_1,\mA)\over IL_\infty(C_2,\mA)}$ converges as $|\mathcal{A}|\rightarrow\infty$, Algorithm \ref{algo:ComparingGeneral} can be rewritten as:
\begin{algorithm} \ \label{algo:ComparingConverge}

\begin{algorithmic}[H] \label{Alg:convergeCompare}
\STATE BEGIN PROGRAM
\IF {$\lim_{|\mathcal{A}|\rightarrow\infty}\frac{IL_\infty(C_1,\mA)}{IL_\infty(C_2,\mA)}=\infty$} \STATE $C_1$ has a higher leakage than $C_2$
\ELSIF{$\lim_{|\mathcal{A}|\rightarrow\infty}\frac{IL_\infty(C_1,\mA)}{IL_\infty(C_2,\mA)}=0$} \STATE $C_2$ has a higher leakage than $C_1$
\ELSE \STATE $C_1$ and $C_2$ are on the same leakage level.
\ENDIF
\STATE END PROGRAM.
\end{algorithmic}
\end{algorithm}

Due to  Algorithm \ref{Alg:convergeCompare}, it is natural to define the leakage level of a FOP $C$ as the rate of convergence of $IL_\infty(C,\mA)$ as $\mA\rightarrow\infty$:

\begin{definition} {\bf Leakage Level} \label{def:leakagelevel}
For any FOPs $C = (\mqA,\mA,F_{\mA},\mpA)$ with $\mqA$ being uniform in $\mA$, if $IL_\infty(C,\mA)$ converges as $\mA\rightarrow\infty$, then the {\em leakage level} of $C$ is defined to be $\Theta^{\cite{cormen2001introduction}}\Big(IL_\infty(C,\mA)\Big) = \Theta\left(1-||\mpA||_\infty\right)$. 
\end{definition}

We claim that algorithm \ref{algo:ComparingGeneral} (and thus algorithm \ref{algo:ComparingConverge}) offers a conflict-free solution to comparing information leakage of two programs. The proof follows directly from Proposition \ref{prop:compare}. 
We note that in algorithm \ref{algo:ComparingGeneral} that there may be cases wherein two programs are incomparable. However, we claim that it may be impossible to offer a more fine-grained comparison of two programs using Renyi-entropy measure as follows. First, we observe that in Algorithm \ref{algo:ComparingGeneral}, information leakage measures for two are distinguishable if and only if the ratio of their min-entropy leakage metric is either $0=1/\infty$ or $\infty$. The following lemma shows that it is impossible to reduce this ratio to some finite $D$ $<$ $\infty$:
\begin{lemma} \label{lem:conflictfinite}
$\forall D>1$, $\exists \alpha,\beta\in(0,\infty],\alpha\neq\beta$, $\exists$ FOPs $C_1 = (\mqA, \mA, F_{\mA},\mpA)$ and $C_2 = (\mqA', \mA, F'_{|\mathcal{A}|},\mpA')$,  with $\mqA$ and $\mqA'$ both being uniform in $\mathcal{A}$, such that,
\begin{eqnarray*}
\lim_{\mA\rightarrow\infty}{IL_\alpha(C_1,\mA)\over IL_\alpha(C_2,\mA)}&>&D ~but\\
\lim_{\mA\rightarrow\infty}{IL_\beta(C_1,\mA)\over IL_\beta(C_2,\mA)}&<&\frac{1}{D}
\end{eqnarray*}
\end{lemma}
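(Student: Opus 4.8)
The plan is to strengthen the conflict exhibited in Lemma~\ref{lem:conflict} (ratios straddling $1$) to ratios straddling an arbitrarily wide gap $[1/D,D]$, by exploiting the qualitative split between leakage regimes that Lemma~\ref{lem:levels} (equivalently, the three cases of $T_\alpha$) already exposes. For $\alpha>1$ (in particular $\alpha=\infty$) the leakage $IL_\alpha=H_\alpha(\mpA)$ is governed, to leading order, only by the total off-peak mass $1-||\mpA||_\infty$; for $0<\beta<1$ it is instead governed by $\sum_{i\ge2}p_i^\beta$, which a \emph{fixed} amount of off-peak mass can inflate arbitrarily by being fragmented into many tiny atoms. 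Since the statement only asks for \emph{some} pair $\alpha\neq\beta$, I would fix once and for all $\alpha=\infty$ and some $\beta\in(0,1)$ and build two FOP families whose min-entropy ratio and $\beta$-leakage ratio diverge in opposite directions.

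For the construction, let $A$ be uniform on $\mathcal{A}$. For $C_1$, route a fixed number $c_1$ of inputs onto a single output and the remaining $\mA-c_1$ onto one dominant output, so $\mpA=(1-c_1/\mA,\,c_1/\mA)$. For $C_2$, route $m$ inputs onto $m$ distinct outputs (one input each) and the rest onto a dominant output, so $\mpA'=(1-m/\mA,\,1/\mA,\dots,1/\mA)$ with $m$ small atoms. Both supports are bounded (by $2$ and $m+1$), so both are FOPs; both maps are onto once $\mA>\max(c_1,m)$; and since the off-peak mass is held constant in $\mA$, both peaks tend to $1$, which is exactly the regime Lemma~\ref{lem:levels} describes.

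Next I would compute the two limiting ratios. For $\alpha=\infty$, Lemma~\ref{lem:levels} (or simply $-\log(1-x)\sim x$) gives $\lim_{\mA\to\infty} IL_\infty(C_1,\mA)/IL_\infty(C_2,\mA)=c_1/m$. For $\beta\in(0,1)$, expanding $H_\beta(\mpA)=\tfrac{1}{1-\beta}\log\!\big[(1-u)^\beta+u^\beta\big]$ with $u=c_1/\mA\to0$ and using $u^\beta\gg u$ together with $\log(1+x)\sim x$ gives $H_\beta(C_1)\sim \tfrac{1}{1-\beta}(c_1/\mA)^\beta$, and likewise $H_\beta(C_2)\sim\tfrac{1}{1-\beta}\,m\,\mA^{-\beta}$, whence $\lim_{\mA\to\infty} IL_\beta(C_1,\mA)/IL_\beta(C_2,\mA)=c_1^\beta/m$. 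It then suffices to pick integers $m,c_1$ with $c_1/m>D$ and $c_1^\beta/m<1/D$, i.e. $Dm<c_1<(m/D)^{1/\beta}$. Since $1/\beta>1$, the length $(m/D)^{1/\beta}-Dm$ tends to $\infty$, so for all sufficiently large $m$ this interval is nonempty and contains an integer $c_1$; fixing such $m$ and $c_1$ finishes the proof.

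The hard part is the $\beta<1$ asymptotics. Lemma~\ref{lem:levels} supplies only $\liminf>0$ and $\limsup<\infty$ for $H_\beta(\mathbf{p})/(1-p_1)^\beta$, not an exact constant, so I cannot simply quote it for the ratio; I must carry out the direct expansion above for these specific two- and $(m+1)$-atom distributions to obtain an honest limit, and verify that both ratios genuinely converge (so that the $\lim$ in the statement is legitimate rather than a mere $\limsup/\liminf$). The remaining points — keeping the off-peak mass constant so the peak tends to $1$ while the support stays bounded, and checking the maps are onto under a uniform prior — are routine.
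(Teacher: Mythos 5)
Your proof is correct, and it takes a genuinely different route from the paper's. The paper assigns $\alpha\in(0,1)$ to the large ratio and $\beta=\infty$ to the small one, and makes both output laws constant in $\mA$: $C_1$ gets $\mpA=(p_0,\frac{1-p_0}{n},\dots,\frac{1-p_0}{n})$ with $2^{-1/D}<p_0<1$ and $n$ large enough that $\log n>D-\frac{\alpha}{1-\alpha}\log(1-p_0)$, while $C_2$ is a fair coin with $H_\alpha=1$ for every $\alpha$; the limits are then trivial, the many tiny atoms force $H_\alpha(\mpA)>D$, and the high peak forces $H_\infty(\mpA)=-\log p_0<1/D$. You make the opposite assignment ($\alpha=\infty$ for the large side, $\beta\in(0,1)$ for the small side) and let both peaks tend to $1$, carrying the conflict by how the vanishing off-peak mass is fragmented: one atom of mass $c_1/\mA$ for $C_1$ versus $m$ atoms of mass $1/\mA$ for $C_2$. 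Your asymptotics are sound: $(1-u)^\beta+u^\beta=1+u^\beta(1+o(1))$ as $u\to0$ for $\beta\in(0,1)$ gives the honest limits $c_1/m$ and $c_1^\beta/m$ (and you are right that Lemma \ref{lem:levels} cannot simply be quoted here, since for $\alpha\in(0,1)$ it provides only $\liminf$/$\limsup$ bounds), and the integer selection $Dm<c_1<(m/D)^{1/\beta}$ succeeds because $1/\beta>1$ makes that interval eventually nonempty, indeed of length tending to infinity. Comparing the two: the paper's construction is shorter, since constant output laws mean no asymptotic expansion is needed at all, but its two programs have leakage bounded away from zero; your construction shows the conflict persists even between two programs that are both asymptotically secure in the paper's own sense ($||\mpA||_\infty\to1$, leakage tending to $0$), which is precisely the regime in which Algorithm \ref{algo:ComparingGeneral} and Definition \ref{def:leakagelevel} operate --- arguably a stronger illustration of why no finite threshold $D$ can yield a conflict-free refinement.
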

\begin{proof}
We first give an intuitive explanation of the proof of Lemma \ref{lem:conflictfinite} here. Recall from Lemma \ref{lem:levels} that it is feasible to make ${H_\alpha(\mathbf{p})\over H_\beta(\mathbf{p})}$ as large as possible for distribution $\mathbf{p}$ with $||\mathbf{p}||_\infty$ close enough to $1$. This allows us to construct a program $C_1$ with $||\mpA||_\infty$ close to $1$, so that we have $IL_\alpha(C_1,\mA)/IL_\beta(C_1,\mA) > D^2$ (when $\mA$ is large) and a program $C_2$ with $IL_\alpha(C_2,\mA) = IL_\beta(C_2,\mA) = \sqrt{IL_\alpha(C_1,\mA)IL_\beta(C_1,\mA)}$ (when $\mA$ is large). Clearly, the constructed programs $C_1$ and $C_2$ satisfies Lemma \ref{lem:conflictfinite}. 

Here we offer a simple example of $C_1$ and $C_2$. Choose $p_0\in (0,1), 2\leq n\in N$ such that,
\begin{eqnarray*}
2^{-1/D}<p_0<1 \\
\log n > D - {\alpha\over 1-\alpha}\log (1-p_0)
\end{eqnarray*}
Specify $C_1$ so that $\mpA = (p_0,{1-p_0\over n},{1-p_0\over n},...{1-p_0\over n})$ for any $\mA>n+1$, and specify $C_2$ so that $\mpA' = (1/2,1/2)$ for any $\mA$. And consider $0<\alpha<1, \beta=\infty$, then
\begin{eqnarray*}
\lim_{\mA\rightarrow\infty}{IL_\alpha(C_1,\mA)\over IL_\alpha(C_2,\mA)} &=&  {1\over 1-\alpha}\log\left[p_0^\alpha + (1-p_0)^\alpha n^{1-\alpha}\right] \geq  {1\over 1-\alpha}\log\left[ (1-p_0)^\alpha n^{1-\alpha}\right]>D \\
\lim_{\mA\rightarrow\infty}{IL_\beta(C_1,\mA)\over IL_\beta(C_2,\mA)} &=& -\log p_0 <1/D
\end{eqnarray*}
\end{proof}

\section{Experimental Results} \label{sec:experiment}
In this section, we report results obtained by applying our technique to compare information leakage of two programs. We begin by reexamining PROG P4 using our Algorithms. Consider four different parameter values of $L$: $L=\mA/c, L = c\log \mA, L =c\sqrt{\mA}, L=c$ where $c>2$ is certain constant. Then,

\begin{small}
\begin{eqnarray*}
\begin{cases}
L=\mA/c, IL_\infty(P4,\mA) = \log[{c\over c-1}] \\
L =c \log \mA, IL_\infty(P4,\mA) = -\log\left[ 1-{\log \mA\over \mA} \right] \approx {\log\mA\over\mA}\\
L=c\sqrt{\mA}, IL_\infty(P4,\mA) = -\log\left[1-{c\sqrt{\mA}\over \mA}\right]\approx {c \over \sqrt{\mA} }\\
L=c, IL_\infty(P4,\mA) = -\log\left[ 1-{c\over \mA} \right] \approx {c\over\mA}
\end{cases}
\end{eqnarray*}
\end{small}

According to definition \ref{def:leakagelevel}, for PROG P4, when $L=\mA/c$, the leakage level is $\Theta(1)$; when $L=c\log\mA$, the leakage level is $\Theta\left(\log\mA/\mA\right)$; when $L=c\sqrt{\mA}$, the leakage level is $\Theta\left(1/\sqrt{\mA}\right)$; when $L=c$ the leakage level is $\Theta\left(1/\mA\right)$. PROG P4 leaks more information as $L$ decreases. The result matches the intuition of program flow leakage. Indeed as $\mA$ $\rightarrow$ $\infty$ then $L$ = $\frac{\mA}{c}$ leaks non-zero information (e.g., when $c$ = 2 the program leaks one bit of information); while for all other values of $c$ considered above the program leaks almost no information.


Let us now consider program P5 (see below): $A$ is the high input and $1<L\in N^+$ is an integer parameter.

\begin{program}{\bf P5}

\begin{algorithmic}[H]
\STATE $O \equiv A \mod{L}$
\end{algorithmic}
\end{program}
For any value of $1<L\in N^+$, $||\mpA||_\infty =  {\lceil \mA/L\rceil\over \mA}\rightarrow {1\over L}$ as $\mA\rightarrow\infty$, so $IL_\infty(P5,\mA)$ is finite for all $\mA$. Thus P5 with any finite $L$ has leakage level $\Theta(1)$, which indicates that P5 is on the same security level as P4 with $L=\mA/c$.

Let us now consider another program P6 (see below): $A$ is an integer with $k$ bits ($\mA=2^k$), and $0\leq L\leq k$ is an integer parameter.

\begin{program} {\bf P6}

\begin{algorithmic}[H]
\IF{$A$ consists of $L$ bits of $1$ and $k-L$ bits of $0$}
\STATE $O=1$
\ELSE \STATE $O=0$
\ENDIF
\end{algorithmic}
\end{program}
Consider different values of $L$: $L=0,1,2,3...$ Then
\begin{eqnarray*}
IL_\infty(P6,2^k) = -\log\left[1-{{k\choose L}\over 2^k}\right] \approx {{k\choose L}\over 2^k}
\end{eqnarray*}
Because ${k\choose L}/{k\choose L+1} \rightarrow 0$ as $k\rightarrow \infty$, the leakage of PROG P6 increases as $L$ increases. Actually, the leakage level of P6 is $\Theta\left(k^L/2^k\right)$. PROG P6 with $L=0$ has the same leakage level as PROG P4 with $L=c$; PROG P6 with $L=1$ has the same leakage level as PROG P4 with $L=c\log{\mA}$.

We have admit with regret that Algorithm \ref{algo:ComparingGeneral} still unable to distinguish all FOPs, take the following program for example, where $A$ is the high input with $k$-bits and $L\in N$ is an integer parameter.

\begin{program} {\bf P7}

\begin{algorithmic}[H]
\IF{$\log\mA=k$ is even }
\STATE $O\equiv A \mod{2}$
\ELSE
\STATE $O = 1_{\{A=L\}}$
\ENDIF
\end{algorithmic}
\end{program}

PROG P7 has leakage level $\Theta\left(1\right)$ when $\log\mA$ is even, but has leakage level $\Theta\left(1/\mA\right)$ when $\log\mA$ is odd. When comparing P7 ($L=1$) with P4 ($L=c\log\mA$), we have $f_\infty=\infty$ but $g_\infty=0$. It is not applicable to determine a constant leakage level of P7 since it switches between high and low leakage constantly.

\section{Summary} \label{sec:conclusion}
In this paper we point out important drawbacks in past approaches to information-theoretic measures for quantifying program leakage. We show using examples that some of the metrics proposed by past work may not only be counter-intuitive but also conflict with each other. We have presented a novel conflict-free approach to compare information leakage in two programs and show that it may be impossible to derive a more fine-grained comparison using Renyi-entropy based leakage measures. Using several examples we show that the proposed approach vastly outperforms past approaches in matching popular consensus on program information leakage. 


\bibliographystyle{ieeetr}
\bibliography{Report_SummerWork}


\end{document}